\documentclass[preprint]{imsart}

\usepackage[utf8]{inputenc}
\usepackage[colorlinks,citecolor=blue,urlcolor=blue,linkcolor=magenta]{hyperref}
\usepackage[abs]{overpic}
\usepackage[authoryear]{natbib}
\usepackage{multirow}
\usepackage{amssymb,amsmath, amsthm}
\usepackage{graphicx}
\usepackage{framed} 
\usepackage[english]{babel}
\usepackage{booktabs}
\usepackage{color}
\usepackage{epsfig}
\usepackage{paralist}
\usepackage{verbatim}
\usepackage{mathabx}  

\newcommand{\si}{\sigma}

\newcommand{\la}{\lambda}

\renewcommand{\phi}{\varphi}


\newcommand{\PP}{\mathbb{P}}

\newcommand{\RR}{\mathbb{R}}


\renewcommand{\tilde}{\widetilde}


\mathchardef\given="626A

\providecommand{\e}{\mathrm{e}}

\newcommand{\bem}{\begin{bmatrix}}
\newcommand{\enm}{\end{bmatrix}}

\newtheorem{thm}{Theorem}[section]

\theoremstyle{definition}

\newtheorem{rem}[thm]{Remark}
\newtheorem{ex}[thm]{Example}

\newcommand{\dd}{{\,\mathrm d}}

\newcommand{\T}{{\prime}}

\newcommand{\di}{\triangle}

\begin{document}

\begin{frontmatter}

\title{On residual and guided proposals for diffusion bridge simulation}

\runtitle{Residual and guided proposals}

\begin{aug}
\author{Frank van der Meulen\ead[label=e1]{f.h.vandermeulen@tudelft.nl} and
Moritz Schauer, \ead[label=e2]{m.r.schauer@math.leidenuniv.nl}}

\runauthor{Van der Meulen and Schauer}

\affiliation{Delft University of Technology and  Leiden University }

\address{Delft Institute of Applied Mathematics (DIAM) \\
Delft University of Technology\\
Mekelweg 4\\
2628 CD Delft\\
The Netherlands\\
\printead{e1}\\[1em]
Mathematical Institute\\
Leiden University\\
P.O. Box 9512\\
2300 RA Leiden\\
The Netherlands\\
\printead{e2}}

\end{aug}

\begin{abstract}
Recently \cite{Whitaker-BA} considered Bayesian estimation of diffusion driven mixed effects models using data-augmentation. The missing data, diffusion bridges connecting discrete time observations, are drawn using a ``residual bridge construct''. In this paper we compare this construct (which we call residual proposal) with the guided proposals introduced in \cite{Schauer}. It is shown that both approaches are related, but use a different approximation to the intractable stochastic differential equation of the true diffusion bridge. It reveals that the computational complexity of both approaches is similar. Some examples are included to compare the ability of both proposals to capture local nonlinearities in the dynamics of the true bridge. 
\medskip

\noindent
 \emph{ {Keywords:} Diffusion process, guided proposals, multidimensional diffusion bridge;  linear process; modified diffusion bridge; residual bridge construct.}
 \end{abstract}

\begin{keyword}[class=MSC]
\kwd[Primary ]{62M05, 60J60}
\kwd[; secondary ]{ 62F15}
\kwd{65C05}
\end{keyword}

\end{frontmatter}

\numberwithin{equation}{section}


\section{Introduction}

Simulation of diffusion bridges has received considerable attention over the past two decades. An important application lies in Bayesian estimation of parameters when discrete time observation are obtained from a diffusion process. The main difficulty here is the intractability of the likelihood. If instead of discrete time observations  continuous time trajectories of the diffusion process were to be observed, then the  problem would be simplified due to tractability of the likelihood. This has inspired many authors to adopt a data-augmentation strategy, where the latent diffusion paths in between two discrete time observations are introduced as (infinite-dimensional) latent variables (see for instance \cite{RobertsStramer}, \cite{Golightly1}, \cite{MR2422763} and \cite{vdm-s-estpaper}). A non-exhaustive list of references on diffusion bridge simulation is  \cite{Eraker}, \cite{ElerianChibShephard}), \cite{DurhamGallant}, \cite{Clark}, \cite{Bladt2}, \cite{beskos-mcmc-methods}, \cite{HaiStuaVo09}, \cite{Schoenmakers}, \cite{LinChenMykland}), \cite{BeskosPapaspiliopoulosRobertsFearnhead}, \cite{DelyonHu}, \cite{lindstrom}, \cite{Schauer} and \cite{Whitaker}. A somewhat more detailed overview is given in the introductory section of \cite{Whitaker}. 

As direct and exact simulation of diffusion bridges in complete generality is infeasible, most approaches consist of simulating a process that resembles the true diffusion bridge (a proxy) and correcting for the discrepancy by either weighting, accept/reject methods or the Metropolis-Hastings algorithm. 
In the following, any stochastic process which is used as a proxy for the diffusion bridge is called a proposal. 

Here we are particularly interested in the connections between the ``guided proposals''  and ``residual proposals'' introduced in \cite{Schauer} and \cite{Whitaker} respectively (the terminology ``residual  proposals'' being ours). This is motivated by the recent paper of \cite{Whitaker-BA} where the authors apply residual proposals  for Bayesian estimation of diffusion driven mixed effects models.

They note that finding a guided proposal that is both accurate and computationally efficient may be difficult in practice (at the bottom of page 441 of their paper.) We show in fact that  a variation of the residual proposals used in \cite{Whitaker-BA} can be obtained as special cases of guided proposals. Moreover, we explain that  the choices to be made for  constructing either residual or guided proposals are similar, as is their computational cost. For that reason we don't see a strong argument to dismiss guided proposals. On the contrary, we present some examples where guided proposals appear to perform better than  residual proposals. That is not to say that residual proposals cannot give a substantial improvement to the modified diffusion bridge of \cite{DurhamGallant} which ignores nonlinearities in the drift. 

The structure of this paper is as follows. In section \ref{sec:general} we introduce notation and concepts used throughout.  In sections \ref{sec:resbridge} and \ref{sec:guided} we recap the approaches introduced in \cite{Whitaker} and \cite{Schauer} respectively. Moreover, in section \ref{sec:resbridge} we derive the likelihood of residual proposals with respect to the true diffusion bridge without resorting to discretisation. For guided proposals we give some new insights for an efficient implementation in subsection \ref{sec:guided-impl}. 
 In section \ref{sec:connections} we expose the relations and similarities of these two approaches and illustrate these with a couple of numerical examples. 

\section{Diffusion bridges}\label{sec:general}
In the following we assume the dynamics of the diffusion process $X=(X_t,\, t\ge 0)$ are governed by the stochastic differential equation (sde)
\[ \dd X_t = b(t,X_t) \dd t + \si(t,X_t) \dd W_t, \qquad X_0=u. \]
Here $b\colon [0,\infty) \times \RR^d \to \RR^d$ and $\si\colon[0,\infty)\times \RR^d \to \RR^{d\times d'}$ are referred to as the drift and dispersion coefficient respectively and $W$ is a $d'$-dimensional vector of uncorrelated Wiener processes.  If we condition the process on hitting the point $v$ at time $T$, the conditioned process is denoted by $X^\star$. 

 The process $X^\star$ is the solution of the SDE
\begin{equation}\label{eq:xstar} \dd X^\star_t = b(t,X^\star_t) \dd t + a(t,X^\star_t) r(t,X^\star_t) \dd t+ \si(t,X^\star_t) \dd W_t, \qquad X^\star_0=u, \end{equation} where $a(t,x) = \si(t,x) \si(t,x)^\T$ is diffusion coefficient,  $r(t,x)=\nabla_x \log p(t,x; T,v)$ and $p$ denotes the transition densities of the diffusion $X$. Hence, the sde for the bridge process $X^\star$ has an additional pulling term in the drift to ensure it hits $v$ at time $T$. Note that as  $p$ is intractable, it is impossible to obtain diffusion bridges by discretising the sde for $X^\star$. 

However, in case $b\equiv 0$ and $\si$ is constant (so that $X$ is a scaled Wiener process), transition densities are tractable and the diffusion bridge has dynamics
\[ \dd X^\star_t = \frac{v-X^\star_t}{T-t} \dd t + \si \dd W_t, \qquad X^\star_0=u. \]
This has motivated \cite{DelyonHu} to consider proposals $\bar{X}$ driven by the sde
\[ \dd \bar{X}_t = \la b(t,\bar{X}_t) \dd t +   \frac{v-\bar{X}_t}{T-t} \dd t + \si(t,\bar{X}_t) \dd W_t,\qquad  \bar{X}_0=u, \]
where either $\la=0$ or $\la=1$ is chosen. Realisations of $\bar{X}$ can then be used in a Metropolis-Hasings acceptance ratio, or by importance sampling as a proxy for $X^\star$. Key to the feasibility of this approach is the derivation of the likelihood ratio $\frac{\dd \PP^\star}{\dd \bar{\PP}^{\phantom \star}}(\bar{X})$, where $\PP^\star$ and $\bar{\PP}$ denote the laws of $X^\star$ and $\bar{X}$ on $C[0,T]$ respectively and $\bar{X}=(\bar{X}_t,\, t\in [0,T])$. For further details we refer to \cite{PapaRobertsStramer} and \cite{vdm-s-estpaper}. 

The choice $\la=0$ appears to be most popular in the literature, especially when applied with a particular correction to Euler discretisation proposed by \cite{DurhamGallant} yielding the ``modified diffusion bridge''. Clearly, this approach completely ignores the drift and can only be efficient when the drift is approximately constant on $[0,T]$. Its simplicity is however attractive as there is no tuning parameter. 


\section{Residual proposals} \label{sec:resbridge}
The key idea in \cite{Whitaker} is to define a proposal $X^\di$ that does take the drift of the diffusion into account by using the decomposition
\[ X^\di_t = x(t) + C_t, \]
where $x(t)$ is defined as the solution to
\begin{equation}
	\label{eq:dyn-system}
 \dd x(t) = b(t,x(t)) \dd t, \qquad x(0)=u
 \end{equation}
and the residual process $C$ is given by
\begin{equation}\label{eq:residual-proc}
	d C_t = \frac{v-x(T)-C_t}{T-t} \dd t + \si(t,x(t) +C_t) \dd W_t,\qquad C_0=0. 
\end{equation} 
In this approach, first the process $x(t)$ is determined, either in closed form or by a numerical procedure. Subsequently, the sde for $C$ is solved using Euler discretisation. 
It is easily seen that $X^\di$ satisfies the sde
\begin{equation}\label{eq:resprop}
	d X^\di_t= b(t,x(t)) \dd t + \frac{v-X^\di_t - (x(T)-x(t))}{T-t} \dd t + \si(t,X^\di_t) \dd W_t, \qquad X^\di_0=u. 
\end{equation} 
Compared to the sde for $X^\star$ given in \eqref{eq:xstar}, it appears that
\begin{enumerate}
  \item  $b(t,X^\star_t)$ is approximated by $b(t,x(t))$,
  \item the pulling term is replaced with 
 \[ \frac{v-X^\di}{T-t} - \frac1{T-t} \int_t^T b(s,x(s)) \dd s\]
 where we recognise the first term as the  pulling term appearing in the proposals introduced by \cite{DelyonHu}.
\end{enumerate}

Define $\kappa(s,x) = (T-s)^{-1}(v-x)$ (considering $T$ and $v$ to be fixed). Denote by $\varphi(x; \mu,a)$  the value of the normal density with mean vector $\mu$ and covariance matrix $a$, evaluated at $x$. 
Denote the law of $X^\di$ on $C[0,T]$ by $\PP^\di$.  The following theorem reveals that $\PP^\star$ is absolutely continuous with respect of $\PP^\di$.  This is crucial, as otherwise of course the process $X^\di$ cannot be used as a proposal. 
\begin{thm}\label{thm:lr-resproc}
Assume the $C^{1,2}$-function $\si$ with values in $\RR^{d\times d}$ is bounded and has bounded derivatives and  is invertible with a bounded inverse. Assume the function $b$ is locally
Lipschitz with respect to $x$ and is locally bounded. Finally, assume the sde for $X$ admits a strong
solution.  Then
\[ \frac{\dd \PP^\star}{\dd \PP^\di}(X^\di)= \frac{\phi(v; u, a(0,u)T)}{p(0,u;T,v)} \sqrt{\frac{|a(0,u)|}{|a(T,v)|}}  \Psi_1(X^\di) \Psi_2(X^\di),\]
with
\begin{align*}  \Psi_1(X^\di)&= \int_0^t b(s,X^\di_s)^\T a^{-1}(s,X^\di_s) \dd X^\di_s- \frac12 \int_0^t b(s,X^\di_s)^\T  a^{-1}(s,X^\di_s) b(s,X^\di_s)\dd s  \\ & -\frac12 \int_0^t \kappa(s,X^\di_s)^\T  \diamond \dd a^{-1}(s,X^\di_s) (v-X^\di_s),  \end{align*}
where  the $\diamond$-integral is obtained as the limit of sums where the integrand is computed at the right limit of each time interval as opposed to the left limit used in the definition of the It\=o integral;
and 
\begin{align*} &	\Psi_2(X^\di) =\exp\left(\frac12 \int_0^T f(s,X^\di_s)^\T a(s,X^\di_s)^{-1} f(s,X^\di_s) \dd s \right.\\ & \left. \qquad -\int_0^T f(s,X^\di_s)^\T a(s,X^\di_s)^{-1} \dd X^\di_s + \int_0^T f(s,X^\di_s)^\T a^{-1}(s,X^\di_s) \kappa(s,X^\di_s) \dd s\right).\end{align*}
Here, $f(s,x) =b(s,x(s)) - (T-s)^{-1}(x_T-x)$ with $x(s)$ as defined in \eqref{eq:dyn-system}.  \end{thm}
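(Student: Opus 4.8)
The plan is to route the change of measure through an intermediate, tractable proposal: the proposal of \cite{DelyonHu} with $\la=0$ (whose Euler scheme with endpoint correction is the modified diffusion bridge of \cite{DurhamGallant}),
\[ \dd\bar X_t=\kappa(t,\bar X_t)\,\dd t+\si(t,\bar X_t)\,\dd W_t,\qquad \bar X_0=u, \]
with law $\bar\PP$ on $C[0,T]$; under the stated hypotheses $\bar X$ is well defined and, like $X^\di$, satisfies $\bar X_T=v$ almost surely. I would then split
\[ \frac{\dd\PP^\star}{\dd\PP^\di}(X^\di)=\frac{\dd\PP^\star}{\dd\bar\PP}(X^\di)\cdot\frac{\dd\bar\PP}{\dd\PP^\di}(X^\di), \]
each factor being the corresponding likelihood-ratio functional on path space evaluated along the realisation of $X^\di$, handling the first factor by absolute continuity of the true bridge with respect to the modified diffusion bridge and the second by Girsanov's theorem.

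\textbf{First factor.} Here I would invoke the main theorem of \cite{DelyonHu}, applicable under exactly the stated conditions on $\si$ and $b$: $\PP^\star\ll\bar\PP$ with
\[ \frac{\dd\PP^\star}{\dd\bar\PP}(\bar X)=\frac{\phi(v;u,a(0,u)T)}{p(0,u;T,v)}\sqrt{\frac{|a(0,u)|}{|a(T,v)|}}\,\Psi_1(\bar X), \]
$\Psi_1$ being the functional in the statement. Its ingredients carry over verbatim, and it is worth recalling their origin: the prefactor is the Gaussian ``bridge density'' with covariance $a(0,u)T$ divided by the true transition density, times the correction $\sqrt{|a(0,u)|/|a(T,v)|}$ from comparing the diffusivity at the two endpoints; the term $-\tfrac12\int_0^T\kappa^\T\diamond\dd a^{-1}(v-\cdot)$ arises from applying It\^o's formula to the logarithm of the Gaussian approximation of $p(s,\cdot\,;T,v)$ near $s=T$ and retaining the quadratic-covariation correction, which does not vanish because $a$ is state-dependent, and the backward-endpoint ($\diamond$) convention is precisely what makes that integral converge. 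To evaluate this path functional along $X^\di$ one needs $X^\di$ to share the pathwise regularity of $\bar X$ near $T$ — finite energy and an approach $v-X^\di_s$ of order $(T-s)^{1/2}$ — which holds because $X^\di=x(\cdot)+C$ with $C$ itself of modified-diffusion-bridge type.

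\textbf{Second factor.} Comparing \eqref{eq:resprop} with the equation for $\bar X$, the dispersion coefficient is the same and the $\kappa$-term common to both drifts cancels, so the drift of $X^\di$ exceeds that of $\bar X$ by the finite-variation function $f$ of the statement. Girsanov's theorem then gives, on $\scr{F}_t$ with $t<T$,
\[ \frac{\dd\bar\PP}{\dd\PP^\di}\bigg|_{\scr{F}_t}(X^\di)=\exp\!\left(\frac12\int_0^t f^\T a^{-1}f\,\dd s-\int_0^t f^\T a^{-1}\dd X^\di_s+\int_0^t f^\T a^{-1}\kappa\,\dd s\right), \]
and I would let $t\uparrow T$: the stochastic integral converges because its integrand is bounded away from $T$ ($a^{-1}$ being bounded), and the apparent $(T-s)^{-1}$-singularity in $\int f^\T a^{-1}\kappa$ is integrable once $v-X^\di_s$ is controlled as above, the limit being $\Psi_2(X^\di)$. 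The exponential is then a uniformly integrable martingale, so $\bar\PP\sim\PP^\di$ and passing to reciprocals is legitimate. Multiplying the two factors, a short integration by parts / It\^o rearrangement — which also cancels the $\kappa$-terms shared between the two pieces and rewrites all integrands consistently along the realised path — brings the product into the displayed form.

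\textbf{Main obstacle.} The real work is the endpoint passage $t\uparrow T$. Showing that the first factor extends to $\scr{F}_T$ with the asserted limit is the substance of \cite{DelyonHu}'s argument and rests on the local comparison of $p(s,\cdot\,;T,v)$ with the Gaussian density of covariance $a(T,v)(T-s)$ together with sharp near-$T$ estimates for $v-X^\star_s$ and $v-X^\di_s$; the Girsanov factor requires the analogous, milder limit. Both use boundedness, smoothness and uniform non-degeneracy of $\si$ — exactly the hypotheses imposed — and the verification that $X^\di$ itself has the required pathwise regularity, which follows from the decomposition $X^\di=x(\cdot)+C$.
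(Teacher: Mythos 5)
Your proof follows exactly the route the paper takes: factor $\dd\PP^\star/\dd\PP^\di$ through the Delyon--Hu proposal $\bar X$ with $\la=0$, cite \cite{DelyonHu} (with the normalising constant from \cite{PapaRobertsStramer}) for $\dd\PP^\star/\dd\bar\PP \propto \Psi_1$, and obtain $\dd\bar\PP/\dd\PP^\di = \Psi_2$ by Girsanov since the two proposals share the dispersion coefficient and differ in drift by $f$. The additional detail you supply on the endpoint passage $t\uparrow T$ is a correct elaboration of what the paper leaves to the cited references.
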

\begin{proof}
Absolute continuity of $\PP^\star$ with respect to $\bar\PP$ was proved in \cite{DelyonHu} under the stated assumptions (which is assumption 4.2 in that paper). Here we consider the proposal with $\la=0$ and the likelihood ratio being proportional to $\Psi_1(X^\di)$. The normalising constant in the likelihood ratio was derived in \cite{PapaRobertsStramer}. 
Now $\bar{\PP}$ is absolutely continuous with respect to $\PP^\di$ with Radon-Nikodym derivative
$\left(\mathrm{d} \bar{\PP} /\mathrm{d}\PP^\di\right)(X^\di) = \Psi_2(X^\di)$. 
The result now follows from
\[ 		\frac{\mathrm{d}\PP^\star}{\mathrm{d}\PP^\di}(X^\di)=  \frac{\mathrm{d}\PP^\star}{\mathrm{d}\bar{\PP}^{\phantom\star}}\frac{\mathrm{d}\bar{\PP}^{\phantom\di}}{\mathrm{d}\PP^\di}(X^\di). \]
\end{proof}

The preceding theorem 
The unknown transition density only shows up as a multiplicative constant in the denominator and henceforth cancels in all calculations for performing MCMC for Bayesian estimation of diffusion processes with a data-augmentation strategy 
(Cf.\ \cite{vdm-s-estpaper}). 

\subsection{Improved residual proposals using the LNA}

Clearly, any deterministic trajectory $(x(t),\, t\in [0,T])$ can be used and the definition of residual proposals is not restricted to \eqref{eq:dyn-system}.  One particular choice proposed by \cite{Whitaker} is based on the linear noise approximation (abbreviated by LNA, see for instance \cite{VanKampen}). It uses the decomposition
\[ X^\di_t = x(t) + \rho(t)+ C_t, \]
where $x(t)$ is defined in \eqref{eq:dyn-system} and   $\rho(t)= E [Y_t \mid Y_T=v]$ with 
\[ \dd Y_t = V(t,x(t)) Y_t  \dd t + \si(x(t)) \dd W_t, \qquad Y_0=u. \]
Here $V(t,y)$ is the matrix with elements $V(t,y)_{i,j}=\partial b_i(t,y)\,/\, \partial y_j$ for $y \in \RR^d$. 


\section{Guided proposals} \label{sec:guided}
The basic idea in \cite{Schauer} is to replace the generally intractable transition density $p$ that appears in \eqref{eq:xstar} by  the transition density of an auxiliary diffusion process $\tilde{X}$  with tractable transition densities. Assume $\tilde{X}$ satisfies the sde
\[\dd \tilde{X}_t = \tilde{b}(t, \tilde{X}_t) \dd t + \tilde{\si}(t, \tilde{X}_t) \dd W_t\] and  denote the transition densities of $\tilde{X}$ by $\tilde{p}$. Define the process $X^\circ$ as the solution of the sde 
\begin{equation}\label{eq:sde-xcirc} \dd X^\circ_t = b^\circ(t,X^\circ_t) \dd t + \si(t,X^\circ_t) \dd W_t,\qquad X^\circ_0=u,\end{equation}
with 
\[ 		b^\circ(t, x) = b(t,x) + a(t, x) \nabla_x  \log  \tilde p(t,x;T,v).\]
A process $X^\circ$ constructed in this way is referred to as a {\it guided proposal}. 

Let $\tilde{a}=\tilde{\si}\tilde{\si}^\T$. 
It is proved in \cite{Schauer} that if $\tilde{a}(T)= a(T,v)$ (and a couple of other somewhat technical conditions) 
\begin{equation}\label{likeli}
\frac{\rm{d} \PP^\star}{\rm{d} \PP^\circ}(X^\circ) = \frac{\tilde{p}(0, u; T,v)}{p(0, u; T,v)}
 \exp\left( \int_0^T G(s,X^\circ_s) \dd s\right),
\end{equation}
where $G$ depends on $b$, $a$,  $\tilde{b}$ and $\tilde{a}$, but not on $p$. Note that the unknown transition density only appears as a multiplicative constant in the dominator.

Regarding the choice of $\tilde{X}$, the class of linear processes,
\begin{align}\label{linsdehomog}
		\dd  \tilde X_t = \tilde B(t) \tilde X_t\dd t  + \tilde\beta(t)  \dd t +  \tilde \sigma(t)  \dd  W_t,
\end{align}
 is a flexible class  with known transition densities and its induced guided proposals  will be referred to as {\it linear guided proposals}. Not surprisingly, the efficiency of guided proposals depends on the choice of $\tilde{B}$, $\tilde\beta$ and $\tilde\si$. 
 A particularly simple type of linear guided proposals is obtained upon choosing $d\tilde{X}_t = \tilde\beta(t) \dd t+ \si(T,v) \dd W_t$. For this  choice
\begin{equation}\label{eq:simple-linear} b^\circ(t,x)= b(t,x)  + a(t, x) a(T,v)^{-1}\frac{v-x- \int_t^T \tilde\beta(s) \dd s}{T-t}. \end{equation}

An alternative is to use the LNA, where 
  $b(t,X_t)$ is approximated by $b(t,x(t)) +V(t,x(t)) (X_t-x(t))$. Here,  $V(t,y)$ is the matrix with elements $V(t,y)_{i,j}=\partial b_i(t,y)\,/\, \partial y_j$ for $y \in \RR^d$.  This gives linear guided proposals with 
\begin{equation}
	\label{eq:gp-lna} 
	\tilde\beta(t)= b(t,x(t)) - V(t,x(t)) x(t) \quad \text{and} \quad \tilde B(t)=V(t,x(t)).
\end{equation} 
An easy choice for $\tilde\si$ is $\tilde\sigma(t)=\sigma(T,v)$. 

\begin{rem}
\cite{Whitaker}  compare various proposals, among which the guided proposal with $\tilde\beta$ and $\tilde{B}$ as in \eqref{eq:gp-lna} and $\tilde\sigma(t)=\sigma(t,x(t))$. However, as correctly remarked in their paper, such proposals do not satisfy the key requirement $\tilde{a}(T,v)=a(T,v)$.  Therefore, the measure $\PP^\star$ will be singular with respect to the law of such proposals. As a fix to this they also considered proposals where  the deterministic process is continually restarted and simulated on $[t,T]$, where $0<t<T$. This comes naturally with a huge increase in computing time. We would rather propose to either take $\tilde\sigma(t)=\sigma(T,v)$ or 
\[ \tilde\sigma(t) = \begin{cases} \si(t,x(t)) & \qquad \text{for}\quad t\in [0,t_0] \\
 	\frac{\si(T,v)-\si_0}{T-t_0} (t-t_0) +\si_0& \qquad \text{for}\quad t\in (t_0,T] 	
 \end{cases},
\] where $\si_0=\si(t_0,x(t_0))$. The latter choice naturally raises the question on how to choose $t_0$ but for sure satisfies the requirement for absolute continuity of $\PP^\star$ with respect to the law of the proposal. 

\end{rem}

\subsection{Implementation}\label{sec:guided-impl}
For guided proposals, the computational cost consists of discretising the sde \eqref{eq:sde-xcirc} and evaluating the likelihood ratio in \eqref{likeli}.  For linear guided proposals, there is a convenient expression for $\tilde{r}$, i.e.\ $\tilde{r}(s,x)=\tilde{H}(s) (v(s)-x)$, where 
 $\tilde{H}(t)=K^{-1}(t)$ with
\begin{align*}
K(t) &= \int_t^T \Phi(t,s) \tilde{a}(s) \Phi(t,s)^\T \dd s
\\
v(t) &= \Phi(t,T)v - \int_t^T \Phi(t,s)\tilde\beta(s) \dd s
\end{align*}
and $\dd \Phi(t) = \tilde{B}(t) \Phi(t) \dd t$, $\Phi(0)=I$. 

We propose computing $K$ and $v$ recursively backwards using the  differential equations
\begin{equation}\label{eq:ode-K}
\frac{\dd}{\dd t} K(t) = \tilde{B}(t) K(t) +  K(t) \tilde{B}(t)^\T -  \tilde{a}(t), \qquad K(T) = 0
\end{equation} and 
\[
\frac{\dd}{\dd t} v(t) = \tilde{B}(t) v(t)  + \tilde{\beta}(t), \qquad v(T) = v.
\]
These equations can be discretised using for example an explicit backwards Runge-Kutta scheme. Next, $\tilde{r}(s,x)$ can be obtained from solving $K(s) \tilde{r}(s,x)=v(s)-x$ for which  the Cholesky decomposition can be used. 

In case $\tilde{B}$ and $\tilde{a}$ are not time-dependent (that is, constant), then $K$ (and of course $v$) can also be computed in closed form. Let $\Lambda$ be the solution to the continuous Lyapunov equation
\[
\tilde B\Lambda + \Lambda \tilde B' + \tilde a = 0.
\]
Then, as verified by direct computation
\[
K(t) = 
     \e^{-(T-t)\tilde B}\Lambda\e^{-(T-t)\tilde B'}- \Lambda
\]
and, with $\mu$ solving $\tilde B\mu + \beta = 0$, 
\[
v(t) =  \e^{-(T-t)\tilde B}(v - \mu) + \mu,
\]
LAPACK includes the function \texttt{trsyl!} to solve the continuous Lyapunov  equation in an efficient manner, and correspondingly  many high level computing environments provide this functionality. Nevertheless, computing $K$ from the closed form expression is  computationally more demanding than discretising \eqref{eq:ode-K}, at the cost of allowing for discretisation error. However,    when using  a 3rd order Runge-Kutta scheme (say), this error is small compared to the discretisation error induced by the  Euler-Maruyama scheme for $X^\circ$.


\section{Connections} \label{sec:connections}

The idea of using  $x(t)$, the solution of the  dynamical system given in \eqref{eq:dyn-system}, can also be used with linear guided proposals. Indeed, in section  4.4 of \cite{vdm-s-estpaper} it was proposed to take guided proposals with 
\begin{equation}\label{eq:choice-linprocess}
	\tilde\beta(t)=b(x(t)), \qquad  \tilde{B}\equiv 0 \qquad \text{and} \qquad  \tilde\sigma = \sigma(T,v).
\end{equation} 
It  follows immediately from \eqref{eq:simple-linear} that the resulting guided proposal satisfies the sde
\begin{multline*}
	\dd X^\circ_t= b(t,X^\circ_t) \dd t  \\ + a(t, X^\circ_t) a(T,v)^{-1}\frac{v-X^\circ_t- (x(T)-x(t))}{T-t}\dd t + \si(t,X^\circ_t) \dd W_t,\qquad X^\circ_0=u.
\end{multline*}  
This is to be compared to the sde for the residual proposal given in \eqref{eq:resprop}. From this we infer the following:
\begin{itemize}
  \item The computational effort for computing either guided or residual proposals is of the same order. The inverse of $a(T,v)$ needs to be computed only once and besides that the only extra calculation required by guided proposals is premultiplication by $a(t,X^\circ_t)$. 
  \item If $\si$ is constant and the residual process defined in \eqref{eq:residual-proc} is redefined to satisfy the sde
\begin{multline*}
	\dd C_t = \left(b(t,x(t)+C_t)-b(t,x(t))\right) \dd t \\+\frac{v-x(T)-C_t}{T-t} \dd t + \si(t,x(t) +C_t) \dd W_t,\qquad C_0=0,
\end{multline*}
then these adjusted residual proposals  are in fact linear guided proposals with $\tilde\beta$, $\tilde{B}$ and $\tilde\si$ specified in \eqref{eq:choice-linprocess}. Note that the difference with the definition of residual proposals is the addition of the  first term on the right-hand-side in the preceding display. 
\item If the residual process defined in \eqref{eq:residual-proc} is redefined to satisfy the sde
\begin{multline*}
	\dd C_t = \left(b(t,x(t)+C_t)-b(t,x(t))\right) \dd t \\+a(t,x(t)+C_t) a(T,v)^{-1}\frac{v-x(T)-C_t}{T-t} \dd t \\+ \si(t,x(t) +C_t) \dd W_t,\qquad C_0=0, 
\end{multline*}
then then these adjusted residual proposals  are simply  linear guided proposals with $\tilde\beta$, $\tilde{B}$ and $\tilde\si$ specified in \eqref{eq:choice-linprocess} (also when $\si$ is nonconstant).

\end{itemize}

\bigskip

We conclude with a couple of examples in which we investigate the behaviour of guided and residual proposals. 

\begin{ex}\label{ex:ouExample}
Suppose $\si$ is constant and  $b(t,x)=-\alpha x$. 
Suppose we condition the process to hit $1$ at time $T$, when $X_0=0$. The dynamical system's solution is given by $x(t)=u e^{-\alpha t}$ and the sde for the (true) bridge can be computed in closed form
\[ \dd X^\star_t = -\alpha X^\star_t \dd t +2\alpha \frac{e^{-\alpha(t-T)} v -X^\star_t}{e^{-2\alpha(t-T)}-1} \dd t + \si \dd W_t, \qquad X_0=u.\]
We took $u=0.1$, $v=1$, $T=3$, $\si=0.1$, $\alpha=2$  and simulated a realisation of both the  guided and residual proposal using the same Wiener increments (Euler discretisation with time step $10^{-4}$ was used). We then repeated this $4$ more times and also simulated $5$ realisations of true bridges; the result is in figure \ref{fig:ouExample}. Clearly, both proposals deviate from true bridges, though the guided proposals resemble true bridges  better  than residual proposals.  
\begin{figure}
\begin{center}
\includegraphics[scale=0.8]{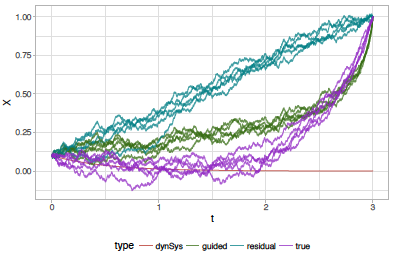}		
\end{center}
\caption{Simulation results for $b(x)=-2 x$ and $\sigma=0.1$. ``dynSys'' refers to the solution of the dynamical system. For true bridges, guided proposals and residual proposals five simulated realisations are shown.}
\label{fig:ouExample}
\end{figure}

\end{ex}

\begin{ex}
Assume again that $\si$ is constant, $X_0=0$ and  $b(t,x)=-\sin(2\pi x)$. In this case $x(t)\equiv 0$ and then residual proposals have dynamics
\[ \dd X^\di_t =\frac{v-X^\di_t}{T-t} \dd t + \si \dd W_t, \qquad X^\di_0=0 \]
whereas guided proposals with \eqref{eq:choice-linprocess} take the form
\[ \dd X^\circ_t = b(t,X^\circ_t) \dd t + \frac{v-X^\circ_t}{T-t} \dd t + \si \dd W_t, \qquad X^\circ_0=0. \]
Both of these proposals have been proposed earlier by \cite{DelyonHu}. As the drift has multiple wells and the process starts in one of those, residual proposals will completely miss the nonlinear dynamics of the true bridge. The guided proposals in this example may do slightly better in resembling true bridges, but will in most cases also perform unsatisfactory. Nevertheless, the class of guided proposals is way more flexible and not restricted to the choice in \eqref{eq:choice-linprocess}. For a very similar drift as considered in this example, this was illustrated in section 1.3 of \cite{Schauer}.

\end{ex}

\begin{ex}
Suppose $b(x)=-\frac12 x -\sin(2\pi x)$, $\sigma=0.15$, $u=5$, $v=2$ and $T=5$. In this case the solution of the (deterministic) dynamical system is not available in closed form. However, it can easily be obtained using the $4$-th order Runge-Kutta scheme. In the top left panel of figure \ref{fig:ouSine} the obtained solution is displayed. Note that  $b(x)=0$ for $x \approx 1.9$ which is close to $v$. Hence bridges follow approximately the solution of the dynamical system, which makes proposals based on this solution suitable.  
In this case we can also simulate true bridges using the method by \cite{Bladt} which is implemented in the {\tt sde}-library in {\tt R}. In figure \ref{fig:ouSine},  $25$ realisations of the bridges and both proposals are shown (Euler discretisation with time step $10^{-4}$ was used).  In this case it is guided proposals appear to perform somewhat better than residual proposals.  

\end{ex}
\begin{figure}
\begin{center}
\includegraphics[scale=0.8]{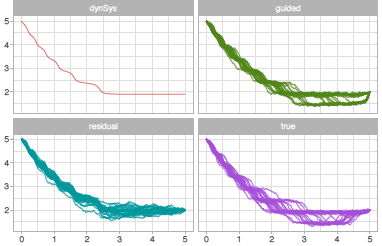}		
\end{center}
\caption{Simulation results for $b(x)=-\frac12 x -\sin(2\pi x)$ and $\sigma=0.15$. Top left: solution of the dynamical system.  Top right: $25$ realisations of guided proposals. Bottom left: $25$ realisations of residual proposals. Bottom right: $25$ realisations of diffusion bridges using the method by \cite{Bladt}. }
\label{fig:ouSine}
\end{figure}

\section*{Acknowledgement} 

This work was partly supported by the Netherlands Organisation for Scientific Research (NWO) under the research programme ``Foundations of nonparametric Bayes procedures'', 639.033.110 and by the ERC Advanced Grant ``Bayesian Statistics in Infinite Dimensions'', 320637.

\bibliographystyle{harry}

\bibliography{lit}

\end{document}